\patchcmd{\maketitle}{\@fnsymbol}{\@alph}{}{}  
\newtheorem{definition}{Definition}
\newtheorem{theorem}{Theorem}[section]
\newtheorem{lemma}{Lemma}[section]
\newtheorem{remark}{Remark}
\title{Minors solve the elliptic curve discrete logarithm problem}
\author{Ansari Abdullah\thanks{SPPU, Pune, India} \and Ayan Mahalanobis\thanks{IISER Pune, Pune, India}}
\date{}
\begin{document}
\maketitle
\begin{abstract}
The elliptic curve discrete logarithm problem is of fundamental importance in public-key cryptography. It is in use for a long time. Moreover, it is an interesting challenge in computational mathematics. Its solution is supposed to provide interesting research directions.

In this paper, we explore ways to solve the elliptic curve discrete logarithm problem. Our results are mostly computational. However, it seems, the methods that we develop and directions that we pursue can provide a potent attack on this problem.  This work follows our earlier work, where we tried to solve this problem by finding a zero minor in a matrix over the same finite field on which the elliptic curve is defined. This paper is self-contained.
\end{abstract}
\section{Introduction} Let $\mathbb{F}$ be a finite field of arbitrary characteristic. Let
\begin{equation}\label{mat1}
  \mathcal{A}=\begin{pmatrix}
    a_{11} & a_{12} & a_{13} & \ldots & a_{1d}\\
    a_{21} & a_{22} & a_{23} & \ldots & a_{2d}\\
    \vdots & \vdots & \vdots & \ddots & \vdots\\
    a_{l1} & a_{l2} & a_{l3} & \ldots & a_{ld}
  \end{pmatrix}
\end{equation}
be a $l\times d$ matrix over $\mathbb{F}$. Let $\alpha,\beta$ be two ordered subsets of $\{1,2,\ldots,l\}$ and
$\{1,2,\ldots,d\}$ respectively and are of the same size $k$. Then we define the sub-matrix $\mathcal{A}[\alpha|\beta]$ of $\mathcal{A}$ to be the square matrix of size $k$ consisting of elements that are in the intersection of the rows in $\alpha$ and columns in $\beta$ in $\mathcal{A}$. The ordering in the sub-matrix is the same as the ordering in the matrix $\mathcal{A}$. The determinant of $\mathcal{A}[\alpha|\beta]$ is a minor of $\mathcal{A}$.

A non-singular elliptic curve over $\mathbb{F}$ will be denoted by $\mathcal{E}$. We will consider it as a plane projective curve in the projective plane $\mathbb{P}^2$.  It is well known that there is an abelian group operation on $\mathcal{E}$ with the point at infinity $\mathcal{O}$ as the zero element. We will use that fact and denote the abelian group by $\mathcal{E}$ as well. We will often assume that $\mathcal{E}$ is of prime order and the order is $p$, a prime. The reader is reminded, if $\mathcal{E}$ is not of prime order but $p$ is any prime divisor of the order of $\mathcal{E}$, a subgroup of $\mathcal{E}$ exist which is of order $p$. We will take that subgroup to be the $\mathcal{E}$ in this paper. Once we assume that the group of rational points of an elliptic curve is of order $p$, the next thing to assume is that it has a generator $\mathrm{P}$, i.e., $\mathcal{E}=\langle \mathrm{P}\rangle$. 
\begin{definition}[The elliptic curve discrete logarithm problem]
Let $\mathcal{E}$ and $\mathrm{P}$ be as above. Then the elliptic curve discrete logarithm problem is: given $\mathrm{P}$ and $\mathrm{Q}=m\mathrm{P}$ in $\mathcal{E}$, where $1<m<p$, compute $m$.
\end{definition}
In this paper, we will deal with the elliptic curve discrete logarithm problem exclusively. Since there is no scope of confusion, we will call it the \emph{discrete logarithm problem}.

We write this paper to share a few things:
\begin{description}
\item[a)] Finding a zero minor in an appropriate matrix, just like $\mathcal{A}$, solves the discrete logarithm problem.
\item[b)] We share some success in finding a zero minor with experiments involving almost principal minor (see Table~\ref{tbl:all-apm}). We believe the reason behind this success is the existence of a \emph{set of initial minors}. This solidifies our conjecture of initial minors.
\item[c)] We solve the discrete logarithm problem for many different large cases. Moreover, we claim that there is a successful attack in the making, using methods in this paper.
\end{description} 
\section{Generate $\mathcal{A}$}
The central idea behind constructing $\mathcal{A}$ is to look at a matrix $\mathcal{M}$ and construct its left-kernel $\mathcal{K}$.
Recall that the left-kernel is the kernel of the transpose of a matrix. Then $\mathcal{A}$ is just a sub-matrix of $\mathcal{K}$.
The idea behind constructing $\mathcal{M}$ is a well known theorem. Before we go to the theorem, we need to be familiar with the concept of a divisor on $\mathcal{E}$.
\begin{definition}[Divisor]
Let $\mathcal{E}$ be the elliptic curve defined over $\mathbb{F}$. The divisor $D$ is defined as the formal sum $D=\sum_{i}n_i[P_i]$, where $n_i$ are integers and $P_i$ are rational points on $\mathcal{E}$. Since the field $\mathbb{F}$ is finite, the formal sum is always finite.
\end{definition}
We present bare minimum details on divisor in this paper, enough to make it self contained. For more on divisors the reader can consult any text book in algebraic geometry. Fulton~\cite{fulton}, Hartshorne~\cite{hartshorne}, Milne~\cite{milne} and  Silverman~\cite{silverman} all contain such details.

If $n_i\geq 0$ the divisor is called effective. The group of all divisors $D$ is a free abelian group and is called the divisor class group. When the divisors belong to the elliptic curve $\mathcal{E}$ it is denoted by $\text{Div}(\mathcal{E})$. The degree of the divisor $D$ is $\sum_{i} n_i$. The set of all degree zero divisors form a subgroup of the group of divisors and is denoted by $\text{Div}^0(\mathcal{E})$.

Let $f$ be a homogeneous rational function in $\mathbb{P}^2$. Then $f=g/h$ where $g$ and $h$ are homogeneous polynomials in three variables of the same degree and furthermore $\mathcal{E}$ does not divide $h$. Then $f$ defines a rational function on $\mathcal{E}$. The points of intersection of $g$ are zeros of $f$ (counting multiplicity) and the points of intersection of $h$ and $\mathcal{E}$ are poles of $f$ (counting multiplicity). Over a algebraically closed field, the number of zeros and poles of $f$ are finite and are the same. Then one can define the divisor corresponding to $f$ as $D_f = \sum n_i[P_i]$ where $g$ intersects $\mathcal{E}$ at $P_i$ with intersection multiplicity $n_i$ or $h$ intersects $\mathcal{E}$ at $P_i$ with multiplicity $n_i$. Note, the multiplicity of a zero is considered positive and that of a pole negative. These divisors are called principal divisors and denoted by $\text{Prin}(\mathcal{E})$. We define the Picard group $\text{Pic}^0(\mathcal{E})=\text{Div}^0(\mathcal{E})/\text{Prin}(\mathcal{E})$. One can now define a map from the group of rational points of an elliptic curve $\mathcal{E}$ to the group $\text{Pic}^0(\mathcal{E})$. This map is know as the Abel-Jacobi map.

\begin{equation}\mathcal{P}\mapsto [\mathcal{P}]-[\mathcal{O}]\end{equation}
For a non-singular elliptic curve over an algebraically closed field this map is an isomorphism.

Let $\mathcal{C}$ be a projective plane curve of degree $n^\prime$ not containing $\mathcal{E}$. Then according to Bezout's theorem the curve $\mathcal{C}$ can intersect $\mathcal{E}$ in at most $3n^\prime$ points (counting multiplicity). We use $d=3n^\prime$ for the rest of the paper.

We now state and sketch a proof of a theorem that was proved in Mahalanobis et al.~\cite{first} (see also~\cite{second}). We include a short proof for the convenience of the reader.

\begin{theorem}
Let $\mathcal{E}$ be as defined above. Then there is a plane projective curve $\mathcal{C}$ passing through $d$ points, $P_1,P_2,\ldots,P_{d}$ of $\mathcal{E}$ and not containing $\mathcal{E}$ if and only if $\sum_{i=1}^{d} P_i =\mathcal{O}$.  Where $\mathcal{O}$ is the neutral element of the abelian group $\mathcal{E}$.
\end{theorem}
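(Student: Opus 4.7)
The plan is to derive the theorem from the Abel--Jacobi isomorphism $\mathcal{E}\to\mathrm{Pic}^0(\mathcal{E})$ recalled in the excerpt. I would choose coordinates so that $\mathcal{O}$ is the point at infinity on $\mathcal{E}$; then the line $L\colon Z=0$ meets $\mathcal{E}$ only at $\mathcal{O}$, and there with intersection multiplicity $3$. This single intersection fact is what ties the existence of a plane curve $\mathcal{C}$ to a principal divisor on $\mathcal{E}$.

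For the forward direction, suppose $\mathcal{C}$ has degree $n'$, is defined by a homogeneous polynomial $C(X,Y,Z)$, does not contain $\mathcal{E}$, and passes through $P_1,\dots,P_d$ with $d=3n'$. I would introduce the rational function $f=C/Z^{n'}$ on $\mathbb{P}^2$; it restricts to a well-defined rational function on $\mathcal{E}$ since $\mathcal{E}\nmid Z^{n'}$. Bezout's theorem forces $\mathcal{C}\cap\mathcal{E}$ to consist of exactly $d$ points counted with multiplicity, so these must be precisely the $P_i$. Computing the divisor on $\mathcal{E}$ then yields $\mathrm{div}(f)=\sum[P_i]-d[\mathcal{O}]$, using that $Z^{n'}$ vanishes on $\mathcal{E}$ only at $\mathcal{O}$ and there with multiplicity $3n'=d$. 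Since this divisor is principal, its class is trivial in $\mathrm{Pic}^0(\mathcal{E})$. Under Abel--Jacobi the group-law sum $\sum P_i$ maps to $\sum([P_i]-[\mathcal{O}])=\sum[P_i]-d[\mathcal{O}]=0$, and the isomorphism then forces $\sum P_i=\mathcal{O}$.

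For the converse, assume $\sum P_i=\mathcal{O}$ with $d=3n'$. Pushing backwards through Abel--Jacobi shows that $\sum[P_i]-d[\mathcal{O}]$ is trivial in $\mathrm{Pic}^0(\mathcal{E})$, hence is the divisor of some rational function $\varphi$ on $\mathcal{E}$. I would then realize $\varphi$ as $C/Z^{n'}$ for a suitable homogeneous polynomial $C$ of degree $n'$; once this is done, the divisor calculation from the forward direction implies that the curve $\mathcal{C}\colon C=0$ meets $\mathcal{E}$ precisely at $P_1,\dots,P_d$, and $\mathcal{C}$ cannot contain $\mathcal{E}$ because $\varphi\ne 0$.

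The hard part will be the last step: lifting the abstract function $\varphi$ on $\mathcal{E}$ to a global degree-$n'$ form on $\mathbb{P}^2$. Concretely, one must show that the linear map from homogeneous polynomials of degree $n'$ (modulo multiples of the Weierstrass equation) into the Riemann--Roch space $L(d[\mathcal{O}])$, given by $C\mapsto C/Z^{n'}$, is surjective onto the function $\varphi$. I would handle this by a dimension count, using Riemann--Roch on $\mathcal{E}$ to compute $\dim L(d[\mathcal{O}])=d$ for $d\ge 1$ and comparing with the dimension of degree-$n'$ forms on $\mathbb{P}^2$ modulo the ideal of $\mathcal{E}$; projective normality of the degree-$3$ Weierstrass embedding closes the gap. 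The base case $n'=1$ reduces to the classical chord-and-tangent law underlying the group operation itself, and can be used to bootstrap higher $n'$ if a more elementary path is wanted.
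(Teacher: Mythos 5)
Your proposal follows essentially the same route as the paper: the forward direction via the rational function $C/Z^{n'}$, which has a pole of order $d=3n'$ at $\mathcal{O}$ and zeros exactly at the $P_i$ by Bezout, combined with the Abel--Jacobi isomorphism; and the converse by lifting a function with divisor $\sum[P_i]-d[\mathcal{O}]$ to a degree-$n'$ form, a step the paper dismisses as ``straightforward'' but which you correctly identify as the place where a Riemann--Roch dimension count (comparing $\dim L(d[\mathcal{O}])=d$ with the $3n'$-dimensional space of degree-$n'$ forms modulo the ideal of $\mathcal{E}$) is needed. The only omission is that Abel--Jacobi is an isomorphism only over the algebraic closure, whereas the theorem is stated over the finite field $\mathbb{F}$ (the paper flags this with its ``Frobenius trick''); since the $P_i$ are $\mathbb{F}$-rational, the linear system of degree-$n'$ curves through them is cut out by linear equations over $\mathbb{F}$ and a nonzero solution over $\overline{\mathbb{F}}$ yields one over $\mathbb{F}$, so this is a minor and easily repaired gap rather than a flaw in the approach.
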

\begin{proof}
The proof over algebraically closed field is straightforward consequence of the fact that the Abel-Jacobi map is an isomorphism.
We pay special attention to the part where there is a plane projective curve $\mathcal{C}$ of degree $n^\prime$ that intersects the elliptic curve $\mathcal{E}$ at $d$ points. We construct the rational function $C/z^{n^\prime}$. Now this rational function has a pole of order $d$ at $\mathcal{O}$. Thus the Abel-Jacobi map says that $\sum_iP_i=\mathcal{O}$. The reverse direction is straightforward.

Extending the proof to finite fields uses a well known trick using the Frobenius automorphism of a finite field.
\end{proof}
Coming back to the discrete logarithm problem, we need to compute $m$ from $\mathrm{P}$ and $\mathrm{Q}$. We intend to compute unique points of the form $n_i\mathrm{P} = P_i$ and $-n_j\mathrm{Q} = Q_j$ where $n_i$ and $n_j$ are integers in the range $1$ to $p$, such that, $\sum_{i} P_i + \sum_{j} Q_j=\mathcal{O}$. This gives us $\sum_{i}n_{i} - m\sum_j n_j = 0 \mod p$. We know $n_i$ and $n_j$ and then we have a linear equation in one unknown $m$. Thus we can solve for $m$ and the discrete logarithm problem is solved. 

Now let us construct the matrix $\mathcal{M}$. For this we take all possible monomials in $x,y,z$ of degree $n^\prime$ ordered in a particular way. The order once chosen remain fixed throughout the paper.  It is known that the number of such monomials is ${n^\prime +2 \choose 2}$.
We construct $\mathcal{M}$ one row at a time. For each $P_i$ we evaluate each monomial at $P_i$ in the above order. The vector thus formed is the $i\textsuperscript{th}$ row of $\mathcal{M}$. We do this for all $P_i$ and $Q_j$ and $\mathcal{M}$ is created. Recall that all $P_i$ and $Q_j$ are distinct. Thus rows of $\mathcal{M}$ are indexed by $P_i$ and $Q_j$. This is the same as $n_i$ and $n_j$ indexing rows in $\mathcal{M}$. This is because $n_i$ and $-n_j$ creates $P_i$ and $Q_j$. The left-kernel of $\mathcal{M}$ is denoted by $\mathcal{K}$. By points in $\mathcal{M}$ we mean those points $P_i$ and $Q_j$ that creates $\mathcal{M}$. Recall that $\mathcal{M}$ has $d$ rows. Thus $\mathcal{K}$ has $d$ columns indexed by $P_i$ and $Q_j$.

The next lemma starts the process of an actual algorithm. To prove this lemma we use the fundamental theorem on four subspaces of a matrix $A$. This states that the nullspace of $A$ is an orthogonal complement of the column space of $A^T$ and the nullspace of $A^T$ is an orthogonal complement of the column space of $A$.
\begin{lemma}
The left-kernel $\mathcal{K}$ is zero if and only if there is no homogeneous plane curve $\mathcal{C}$ of degree $n^\prime$ that passes through the points in $\mathcal{M}$ and does not contain the elliptic curve $\mathcal{E}$.
\end{lemma}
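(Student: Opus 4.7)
The plan is to convert the geometric non-existence of $\mathcal{C}$ into a rank condition on $\mathcal{M}$ and then invoke the four-subspaces theorem to extract the left kernel $\mathcal{K}$.

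First, I encode curves as vectors: a homogeneous plane curve $\mathcal{C}: C(x,y,z)=0$ of degree $n^\prime$ is determined up to scalar by its coefficient vector $c\in\mathbb{F}^N$ in the fixed monomial basis, where $N=\binom{n^\prime+2}{2}$. Since the $i$-th row of $\mathcal{M}$ records the values of those monomials at the $i$-th point, the condition that $\mathcal{C}$ passes through every point in $\mathcal{M}$ is exactly $\mathcal{M}c=0$. So the curves through all the points are parameterized by the right nullspace $\ker\mathcal{M}$.

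Next, I isolate the ``trivial'' part of $\ker\mathcal{M}$. Since $\mathcal{E}$ is a non-singular cubic with irreducible defining polynomial $E$, a homogeneous $C$ of degree $n^\prime$ cuts out a curve containing $\mathcal{E}$ iff $E\mid C$; such $C$ form a subspace $\mathcal{V}$ of coefficient space of dimension $\binom{n^\prime-1}{2}$ (which is zero when $n^\prime<3$, as then $\deg C<\deg E$). The arithmetic identity
\[
\binom{n^\prime+2}{2}-3n^\prime=\binom{n^\prime-1}{2},
\]
rewritten as $N-d=\dim\mathcal{V}$, is the key numerical fact, and the inclusion $\mathcal{V}\subseteq\ker\mathcal{M}$ is automatic because elements of $\mathcal{V}$ vanish on all of $\mathcal{E}$ and in particular at the listed points.

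I now close via the four-subspaces theorem, which supplies $\dim\mathcal{K}=d-\mathrm{rank}(\mathcal{M})$ and $\dim\ker\mathcal{M}=N-\mathrm{rank}(\mathcal{M})$; subtracting gives $\dim\ker\mathcal{M}-\dim\mathcal{K}=N-d=\dim\mathcal{V}$. Thus $\mathcal{K}=\{0\}$ iff $\dim\ker\mathcal{M}=\dim\mathcal{V}$, and combined with $\mathcal{V}\subseteq\ker\mathcal{M}$ this is equivalent to $\ker\mathcal{M}=\mathcal{V}$, i.e., every degree-$n^\prime$ curve through the points of $\mathcal{M}$ contains $\mathcal{E}$---which is the stated condition of the lemma. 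The only non-formal step in this argument is the dimension equality $\dim\mathcal{V}=N-d$: the binomial identity is routine, but identifying polynomials vanishing on $\mathcal{E}$ with multiples of $E$ uses the irreducibility of $E$ (equivalently, the non-singularity of $\mathcal{E}$), which is the point where one must briefly step outside pure linear algebra and borrow an ingredient from algebraic geometry.
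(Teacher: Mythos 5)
Your proof is correct and follows essentially the same route as the paper's: both arguments rest on the containment of the space of degree-$n^\prime$ multiples of the cubic $E$ inside $\ker\mathcal{M}$, the binomial identity $\binom{n^\prime+2}{2}-3n^\prime=\binom{n^\prime-1}{2}$, and the four-subspaces/rank--nullity theorem. The only difference is cosmetic: you subtract the two rank--nullity identities to get $\dim\ker\mathcal{M}-\dim\mathcal{K}=\dim\mathcal{V}$ and read off both implications at once, whereas the paper runs the same dimension count separately in each direction.
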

\begin{proof} The proof uses an easy and elementary counting argument. First assume
that $\mathcal{K}=0$. Then the dimension of the column space of $\mathcal{M}^T$ is $3n^\prime$. Then the dimension of the kernel of $\mathcal{M}$ is ${n^\prime +2 \choose 2}-3n^\prime={n^\prime -1 \choose 2}$. This proves that the right-kernel of $\mathcal{M}$ contains all monomials of degree $n^\prime -3$. This is the set of all monomials of degree $n^\prime$ that contains the elliptic curve.

Conversely assume that that the right-kernel of $\mathcal{M}$ contain curves $\mathcal{C}$ that contain the elliptic curve. Then the dimension of the subspace is ${n^\prime -1 \choose 2}$. Furthermore, ${n^\prime+2\choose 2}-{n^\prime-1\choose 2}=3n^\prime$. Thus the dimension of the kernel of $\mathcal{M}^T=0$.
\end{proof}
The above lemma is interesting in its own right. But, from the perspective of solving the discrete logarithm problem it is rather useless. This is because, to use the above theorem in solving the discrete logarithm problem we have to find points $P_i$ and $Q_j$ that adds up to $\mathcal{O}$. There is no way to predict which points might do that. So in practice the complexity is worse than solving the discrete logarithm problem by exhaustive search. 

The next theorem is of vital interest for the algorithm that we develop in this paper. In this case we take $l$ extra points. In other words, the matrix $\mathcal{M}$ created above now has $d + l$ rows constructed out of $d + l$ points $P_i$ and $Q_j$. From this matrix, the left-kernel $\mathcal{K}$ can be computed easily. The columns of this left-kernel $\mathcal{K}$ corresponds to the points $P_i$ and $Q_j$ on $\mathcal{E}$ from which $\mathcal{M}$ was created. For the experiments we take $n^\prime=\log_{2}{p}$. In practice one can take $n^\prime$ to be some constant multiple of $\log_2{p}$ where the constant is a small integer. When this is the case, computing $\mathcal{K}$ from $\mathcal{M}$ is polynomial in both space and time complexity.
\begin{theorem}{(Extra points theorem)}\label{extra}
Let $\mathcal{M}$ has $d + l$ rows coming from $d + l$ points $P_i$ and $Q_j$ for some integer $l\geq 1$. Then the left-kernel $\mathcal{K}$ of $\mathcal{M}$ has dimension $l$. Furthermore, if there is a vector $v$ in $\mathcal{K}$ with $l$ zeros, there is a curve $\mathcal{C}$ that passes through $d$ points $P_i$ and $Q_j$ which corresponds to non-zero points of the vector $v$. 
\end{theorem}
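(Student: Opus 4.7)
The plan is to split the statement into two parts: first, the dimension formula $\dim\mathcal{K}=l$; second, the reconstruction of a curve $\mathcal{C}$ of degree $n^\prime$ from any $v\in\mathcal{K}$ whose zero-pattern has exactly $l$ zeros. Both parts reduce to the rank computation already implicit in the previous lemma, combined with a simple restriction trick.

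For the dimension formula, I would start by observing that the $(d+l)\times\binom{n^\prime+2}{2}$ matrix $\mathcal{M}$ has columns indexed by the degree-$n^\prime$ monomials in $x,y,z$. The $\binom{n^\prime-1}{2}$-dimensional subspace spanned by multiples of the cubic defining $\mathcal{E}$ sits inside the right-kernel of $\mathcal{M}$ unconditionally, because every such form vanishes at every point of $\mathcal{E}$. Hence $\mathrm{rank}(\mathcal{M})\leq\binom{n^\prime+2}{2}-\binom{n^\prime-1}{2}=3n^\prime=d$. Conversely, under the standing assumption (also used in the previous lemma) that no degree-$n^\prime$ form outside these multiples vanishes on all of the chosen points $P_i,Q_j$ --- a condition that holds for generic choices of $d+l$ points --- the right-kernel is exactly that subspace and the rank equals $d$. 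Rank--nullity applied to $\mathcal{M}^{T}$ then yields $\dim\mathcal{K}=(d+l)-d=l$.

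For the curve construction I would argue by restriction. Suppose $v=(v_1,\dots,v_{d+l})^{T}\in\mathcal{K}$ has exactly $l$ zero entries, and let $S\subseteq\{1,\dots,d+l\}$ collect the indices with $v_i\neq 0$, so $|S|=d$. Let $\mathcal{M}_S$ denote the submatrix on the rows indexed by $S$ and $v_S$ the corresponding sub-vector. Since the entries of $v$ outside $S$ are zero, the identity $v^{T}\mathcal{M}=0$ restricts to $v_S^{T}\mathcal{M}_S=0$, so $\mathcal{M}_S$ has non-trivial left-kernel. But $\mathcal{M}_S$ is an honest matrix with exactly $d$ rows, so the previous lemma applies verbatim: non-trivial left-kernel is equivalent to the existence of a homogeneous plane curve $\mathcal{C}$ of degree $n^\prime$ passing through the $d$ points of $S$ and not containing $\mathcal{E}$, which is the curve required.

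The main obstacle I anticipate is the rank lower bound in the first part. The inequality $\mathrm{rank}(\mathcal{M})\leq d$ is essentially algebraic geometry and always holds, but the equality requires the $d+l$ chosen points to impose independent linear conditions on degree-$n^\prime$ forms modulo multiples of $\mathcal{E}$. This genericity hypothesis is not spelled out in the statement; strictly speaking one only gets $\dim\mathcal{K}\geq l$ in general, with equality under the assumption. Once this subtlety is made explicit (or absorbed into the previous lemma), the remainder is straightforward sub-matrix bookkeeping.
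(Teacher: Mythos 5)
Your overall route is the same as the paper's: compute the rank of $\mathcal{M}$ by identifying its right-kernel with the degree-$n^\prime$ multiples of the cubic defining $\mathcal{E}$, deduce $\dim\mathcal{K}=(d+l)-3n^\prime=l$, and then obtain the curve by restricting $v^{T}\mathcal{M}=0$ to the rows where $v$ is non-zero and invoking the earlier lemma. The second half of your argument is exactly the paper's.

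There is, however, a genuine gap in your first half, and you have flagged it yourself without closing it: you leave the rank lower bound $\mathrm{rank}(\mathcal{M})\geq d$ as an unproven ``genericity'' hypothesis on the chosen points, concluding only $\dim\mathcal{K}\geq l$ in general. But no genericity is needed, and supplying the missing argument is precisely the point of the \emph{extra points} in the theorem. Since the $d+l$ points are distinct and $l\geq 1$, there are strictly more than $3n^\prime$ of them; by Bezout's theorem a plane curve of degree $n^\prime$ not containing the (irreducible, nonsingular) cubic $\mathcal{E}$ meets it in at most $3n^\prime$ points, so \emph{every} degree-$n^\prime$ form vanishing at all $d+l$ points must be divisible by the cubic. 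Hence the right-kernel of $\mathcal{M}$ is exactly the $\binom{n^\prime-1}{2}$-dimensional space of multiples of the cubic, the rank is exactly $3n^\prime=d$, and $\dim\mathcal{K}=l$ unconditionally. This is how the paper argues, and it is why the statement needs no hypothesis on the points beyond distinctness; replacing your genericity assumption with this Bezout step completes the proof.
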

\begin{proof}
First note that the new $\mathcal{M}$ has $d + l$ rows and ${n^\prime + 2 \choose 2}$ columns. It follows from Bezout's theorem, if there is a curve $\mathcal{C}$ that passes through more than $3n^\prime$ points of the elliptic curve $\mathcal{E}$, it must contain the elliptic curve as a component. This says that the dimension of the right-kernel of $\mathcal{M}$ is ${n^\prime -1 \choose 2}$. Since the right-kernel is the orthogonal complement of the column space of $\mathcal{M}^T$. This says that the dimension of the column space of $\mathcal{M}^T$ is  ${n^\prime+2 \choose 2}-{n^\prime -1 \choose 2}=3n^\prime$. Since the row-rank equals the column-rank, the dimension of the column space of $\mathcal{M}$ is $3n^\prime$. Then the orthogonal complement of the column space is the dimension of $\mathcal{K}$ the kernel of $\mathcal{M}^T$ which is $l$.

Now if there is a vector $v$ with $l$ zeros, then take the nonzero components of $v$ and the corresponding rows of $\mathcal{M}$. Then using the earlier theorem we have a curve passing through those points where $v$ is non-zero. 
\end{proof}
\begin{remark}\label{rem}
In the above theorem we are talking about $l$ zeros. Why the number of zeros can not be different than $l$?
\end{remark}
In our experience with computing the discrete logarithm problem, we have never seen this happen except in one case. This is when a chosen point $n_i$ is the secret $m$. This is tantamount to solving the discrete logarithm problem by guessing and never happens for large enough prime $p$. The other reason is theoretical. If there is a vector $v$ with more than $l$ zeros then that vector is non-zero in less than $3n^\prime$ places. We create a sub-matrix $\mathcal{M}^\prime$ of $\mathcal{M}$ corresponding to the non-zero position in $v$. The number of these positions is less than $3n^\prime$ and thus the number of rows in $\mathcal{M}^\prime$ is less than $3n^\prime$. We extend this matrix $\mathcal{M}^\prime$ to $3n^\prime$ rows by putting in arbitrary rows corresponding to arbitrary points from $\mathcal{E}$. Then the non-zero components of $v$ along with few zeros at appropriate places will be in the left-kernel of this extended matrix. Then the sum of the points that make the rows of this extended matrix will be $\mathcal{O}$. This is absurd as the rows that extend $\mathcal{M}^\prime$ was arbitrarily chosen. Thus the number of zeros never exceed $l$. However, if there are fewer than $l$ zeros then the curve $\mathcal{C}$ so obtained contain the elliptic curve $\mathcal{E}$ and is thus useless for our purpose. From the matrix $\mathcal{K}$, the matrix $\mathcal{A}$ is extracted.
\begin{remark} How do we ensure that there is atleast one $P_i$ and one $Q_j$ in the
non-zero entries of $v$ created above? 
\end{remark}
The above is an important point because unless both $P_i$ and $Q_j$ are present in the non-zero entries of the $v$ produced above, the discrete logarithm problem is not solved. However, there are many ways to solve this problem. First, in our experiments we optimize $d$ and $l$. It is well known that ${d+l}\choose{l}$ is the biggest when $d=l$. Thus we choose $d=l$ for all our experiments and for the rest of the paper. Thus our matrix $\mathcal{K}$ is of size $l\times 2l$ and $\mathcal{A}$ is of size $l \times l$. 

So now we have $2l$ points $P_i$ and $Q_j$. We have to ensure that every choice of $l$ points from $2l$ points must have at least one $P_i$ and $Q_j$. We took $l-1$ points $P_i$ and $l+1$ points $Q_j$. There is a small chance that all the $l$ points that we get from non-zero components of $v$ are all $Q_j$. In which case we have not solved the discrete logarithm problem. While doing these experiments, in many cases we went all the way and solved for $m$. And we were successful all the time. This is because ${l+1\choose l}$ is very small compared to ${2l\choose l}$. 
\section{A zero-minor solves the discrete logarithm problem}
In this section, we show that finding a zero-minor in $\mathcal{A}$ solves the discrete logarithm problem by showing that there is one-one correspondence between zero-minors in $\mathcal{A}$ and maximal zero-minors in $\mathcal{K}$.  Note that $\mathcal{K}$ is a $l\times 2l$ matrix and $\mathcal{A}$ is a $l\times l$ matrix. Recall that for the purpose of experiments we are assuming that $d=l$.

The size of $\mathcal{K}$ and $\mathcal{A}$ follows from the fact that the left-kernel $\mathcal{K}$ is a $l$ dimensional vector space. The basis matrix of the subspace $\mathcal{K}$ is denoted by $\mathcal{K}$ as well; where rows and columns of $\mathcal{K}$ are indexed by $\{1,2,\ldots,l\}$ and $\{1,2,\ldots,2l\}$ respectively. 

Each row of the matrix $\mathcal{K}$ is a basis vector of the subspace $\mathcal{K}$. Then one can do row operations to reduce $\mathcal{K}$ into the following form:
  \begin{equation}\label{mat2}
  \mathcal{K}=\begin{pmatrix}
    a_{11} & a_{12} & a_{13} & \ldots & a_{1l}& 0 & \ldots & 0 & 1\\
    a_{21} & a_{22} & a_{23} & \ldots & a_{2l}& 0 &\ldots & 1 & 0\\
    \vdots & \vdots & \vdots & \ddots & \vdots& 0 &\iddots & \vdots &\vdots\\
    a_{l1} & a_{l2} & a_{l3} & \ldots & a_{ll}& 1 &\ldots & 0 & 0
  \end{pmatrix}
\end{equation}
where the right hand side is in the reverse identity form, which we will refer to as the sparse part. The left hand side is the dense part which is $\mathcal{A}$. If any one of $a_{ij}=0$ where $i,j=1,2,\ldots,l$ then the discrete logarithm problem is solved. So we can safely assume that $a_{ij}\neq 0$ for all $i,j$.

By a maximal minor in $\mathcal{K}$ we mean a minor of size $l$, in which case all the rows are chosen and $l$ columns from $2l$ columns are chosen. Let $\{a_1,a_2,\ldots,a_k\}$ be the rows and $\{b_1,b_2,\ldots,b_k\}$ be the columns of a zero-minor in $\mathcal{A}$. Let $\{a_1^\prime,a_2^\prime,\ldots,a_{l-k}^\prime\}$ be the ordered complement of the set $\{a_1,a_2,\ldots,a_k\}$ in $\{1,2,\ldots,l\}$. Then the maximal minor in $\mathcal{K}$ consist of the columns $\{b_1,b_2,\ldots,b_k,l+a_1^\prime,l+a_2^\prime,\ldots,l+a_{l-k}^\prime\}$. To see why this is true, notice that if there is a zero-minor in $\mathcal{A}$, then there are row operations in $\mathcal{A}$ that gives a zero row in the sub-matrix of the minor. Use the same row operations on $\mathcal{K}$. Now note in the sparse part of $\mathcal{K}$ the position of the one is $(i,l+i)$ for each $i\in\{1,2,\ldots,l\}$. The row operation changes the structure of the sparse part of $\mathcal{K}$. The zero-row in the sub-matrix gives rise to a non-zero part of size $k$ in the sparse part. So the maximal minor is the columns in $\mathcal{A}$ and then the part in the sparse part of $\mathcal{K}$ that does not contain the non-zero part we just talked about, ordered accordingly. An observant reader might argue that what happens if all the rows of the sub-matrix of the minor were not used to get the zero-row. In that case we will have more than $l$ zeros in a row and then a look at Remark~\ref{rem} says that it can not be the case. So all rows must be involved. Thus we have proved the following theorem.
\begin{theorem}
For every zero-minor arising from a proper sub-matrix of $\mathcal{A}$ there is a maximal zero-minor in $\mathcal{K}$. Furthermore, if there is a maximal zero-minor in $\mathcal{K}$ then that solves the discrete logarithm problem.
\end{theorem}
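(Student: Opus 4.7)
The plan is to split the statement into two independent assertions and prove each by a short linear-algebra argument, formalising the sketch already given in the paragraph preceding the theorem. The two key tools are the \emph{Extra points theorem} (Theorem~\ref{extra}) and Remark~\ref{rem}, which controls how many zeros a vector in $\mathcal{K}$ can have. Throughout, I exploit the normal form of $\mathcal{K}$ displayed in equation~(\ref{mat2}), where the right-hand $l\times l$ block is the reverse identity so that the unique $1$ in row $i$ of the sparse part sits in column $l+(l-i+1)$.

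For the first assertion, suppose a $k\times k$ submatrix of $\mathcal{A}$ on rows $\{a_1,\ldots,a_k\}$ and columns $\{b_1,\ldots,b_k\}$ has determinant zero. Then its rows are linearly dependent, so there exist scalars $\lambda_1,\ldots,\lambda_k$, not all zero, with $\sum_i \lambda_i\,\mathcal{A}[a_i,b_j]=0$ for every $j$. I would now form the same linear combination of the corresponding rows of $\mathcal{K}$, obtaining a vector $v=\sum_i\lambda_i\,\mathcal{K}[a_i,\cdot]$ which lies in the row space of $\mathcal{K}$. By construction $v$ vanishes in the columns $b_1,\ldots,b_k$ of the dense block, and in the sparse block it vanishes in precisely those $l-k$ positions corresponding to the complementary rows $\{a_1',\ldots,a_{l-k}'\}$, since only the selected rows contribute a $1$ there. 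Hence $v$ has zeros in at least the $k+(l-k)=l$ columns $\{b_1,\ldots,b_k,\ l+a_1',\ldots,l+a_{l-k}'\}$. Restricting $\mathcal{K}$ to these $l$ columns gives an $l\times l$ submatrix with $v$ in its left-kernel, so this submatrix is singular; this is the required maximal zero-minor in $\mathcal{K}$.

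For the second assertion, suppose $\mathcal{K}[\,\cdot\,|S\,]$ is singular for some set $S$ of $l$ columns. Its left-kernel is non-trivial, so there is a nonzero $w\in\mathbb{F}^l$ with $w\,\mathcal{K}[\,\cdot\,|S\,]=0$. Setting $v=w\mathcal{K}$ yields a nonzero vector in the row space of $\mathcal{K}$ that vanishes on every column in $S$, hence has at least $l$ zero entries. By Remark~\ref{rem}, $v$ has exactly $l$ zeros, so exactly $l$ nonzero entries. Applying the Extra points theorem to $v$ produces a plane curve $\mathcal{C}$ of degree $n'$ passing through the $l$ points of $\mathcal{E}$ indexed by the nonzero entries of $v$, whence the sum of these points is $\mathcal{O}$. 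As explained in the discussion following Theorem~\ref{extra}, writing the relation out with each point expanded as $n_i\mathrm{P}$ or $-n_j\mathrm{Q}=-n_j m\mathrm{P}$ yields a linear congruence $\sum_i n_i - m\sum_j n_j \equiv 0\pmod{p}$ in the single unknown $m$, which we solve to recover the discrete logarithm (using the standard calibration that ensures both a $P_i$ and a $Q_j$ appear among the nonzero entries).

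The main obstacle, and the point that most deserves care, is the bookkeeping in the forward direction: after the row combination, one must verify that $v$ has \emph{exactly} $l$ zeros, not more, so that the resulting dependence among columns of $\mathcal{K}$ is genuinely supported on an $l\times l$ submatrix rather than on a smaller one. Remark~\ref{rem} delivers this, but it must be invoked explicitly, since otherwise a row combination that kills too many entries of the dense block would, in principle, fail to land on an honest maximal minor. Everything else reduces to the orthogonality between $\mathcal{K}$ and the column space of $\mathcal{M}^T$ already used in the proof of Theorem~\ref{extra}.
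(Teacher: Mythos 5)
Your proof is correct and follows essentially the same route as the paper: the forward direction exhibits the linear dependence among the selected rows, tracks it through the dense and sparse (reverse-identity) blocks of $\mathcal{K}$ in equation~(\ref{mat2}) to locate the $l$ columns of the maximal zero-minor, and uses Remark~\ref{rem} to control the zero count, while the converse extracts a left-kernel vector with exactly $l$ zeros and hands it to Theorem~\ref{extra}. The only differences are cosmetic (explicit linear combinations in place of the paper's row operations, and note that it is the coefficient vector $(\lambda_1,\ldots,\lambda_k)$, not $v$ itself, that lies in the left-kernel of the restricted $l\times l$ submatrix).
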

\begin{proof}
The first part is already proved. To prove the last part of the theorem, first note that the maximal minor of $\mathcal{K}$ is of size $l$. Then the maximal zero-minor of $\mathcal{K}$ gives rise to a vector in the left-kernel with exactly $l$ zeros; Theorem~\ref{extra} takes care of the rest.
\end{proof}
\section{Finding a zero-minor in $\mathcal{A}$} In this section, we talk about methods and algorithms to find a zero minor in the $l\times l$ matrix $\mathcal{A}$ which is non-singular. It is easy to see that if $\mathcal{A}$ is singular the discrete logarithm problem is solved. The number of minors in $\mathcal{A}$ is exponential in the size of the matrix. For a square matrix of size $l$, just the number of principal minors is $2^{l}-1$. The number of total minors far exceed that. Thus, it is not possible to test all minors.

The purpose of our experiments in this paper is to solve this obvious conundrum -- how to talk about all minors without involving all of them? The \textit{modus operandi} for our experiments is the following:
\begin{description}
\item[$\dag$] Fix a set of minors. Then look for a zero minor in this set.
\item[$\dag$] Fix an elliptic curve over a finite field of characteristic 2 and two points $\mathrm{P}$ and $\mathrm{Q}$. 
\item[$\dag$] Create a left-kernel $\mathcal{K}$. This part is randomized. Then create the submatrix $\mathcal{A}$ from $\mathcal{K}$.  
\item[$\dag$] Check for zero minor in the fixed set of minors in $\mathcal{A}$. If one found, the program stops.
\item[$\dag$] Repeat with a new kernel $\mathcal{K}$ until a zero minor is found in the fixed set of minors and count the number of kernels needed.
\end{description}
We used three fixed sets. One, the set of all 2-minors. Two, the set that comes out of Gaussian elimination and Schur complement and the third is the set of almost principal minors. The size of all these sets are polynomial in the size of the matrix $\mathcal{A}$. We report our results and subsequent improvements.

We started by testing all $2 \times 2$ minors of $\mathcal{A}$ for small enough fields. We call this experiment the \emph{all\_2\_minor} experiment and was reported in our earlier work~\cite{second}. We provide the result for this experiment in Table~\ref{tab:2minorResult}.

\begin{table}[h]\centering
\caption{all\_2\_minor : Average number of kernels to solve ECDLP}
\label{tab:2minorResult}
\begin{tabular}{lccc}\toprule
$\mathbb{F}_q$ & Average kernel count & $log_2 (p)$  & No. of rows in $\mathcal{A}$ \\\midrule
$2^{40}$ & 113.4 & 39 & 360 \\
$2^{43}$ & 716.3 & 42 & 387 \\
$2^{46}$ & 4406.26 & 45 & 414 \\
\bottomrule
\end{tabular}
\end{table}
In this experiment we counted the number of kernels required before a zero minor was detected by checking all 2 minors. For each input the experiment was repeated forty times and the average kernel count was reported in Table~\ref{tab:2minorResult}.
Here the average kernel count increases exponentially.
The next obvious step was to test all $3 \times 3$ minors.
But there are too many of them and it was computationally very expensive to test all 3 minors. We can not keep the dimensions of the sub matrix increasing. That will defeat the purpose of the experiment.

\subsection{Initial Minors}The purpose of this paper, is to show experimentally that probably there is a small set of minors and it is enough to check that set for a zero minor. We call this the initial minor conjecture and is from our earlier work~\cite{second}. This is the reason that we are experimenting with a fixed set of minors. The idea is to somehow approximate this set. Gain some understanding about it before we can formally prove its existence.

\textbf{Conjecture (Initial Minors).} \textit{For a matrix $\mathcal{A}$,  there exists a set of minors, whose cardinality is bounded by a polynomial or a sub-exponential function in the size of the matrix. If all the minors in this set is non-zero then all minors of the matrix is non-zero as well. This set is called a set of initial minors.}

We report some success with our experiments. We attribute this success to the existence of a set of initial minors. Our first experiment was reported~\cite{second} and was the Gaussian elimination Schur complement experiment. We present that result in
Table~\ref{tab:gesc}.
\begin{table}[h]\centering
\caption{Gaussian elimination Schur complement: Average kernels}\label{tab:gesc}
\normalsize
\begin{tabular}{lccc}\toprule
$\mathbb{F}_q$ & Average kernel count & $log_2 (p)$  & No. of rows in $\mathcal{A}$ \\\midrule
$2^{40}$ & 2.8 & 39 & 360 \\
$2^{41}$ & 3.1 & 40 & 369 \\
$2^{42}$ & 6.6 & 41 & 378 \\
$2^{43}$ & 11.3 & 42 & 387 \\
$2^{44}$ & 17.1 & 42 & 396 \\
$2^{45}$ & 53.8 & 44 & 405 \\
$2^{46}$ & 40.2 & 45 & 414 \\
$2^{47}$ & 126 & 46 & 423 \\
$2^{48}$ & 186.6 & 47 & 432 \\
$2^{49}$ & 367.8 & 48 & 441 \\
$2^{50}$ & 887.5 & 49 & 450 \\
\bottomrule
\end{tabular}
\end{table}
The discrete logarithm problem for a $50$-bit binary field was solved using this algorithm. This algorithm works by reducing the input matrix $\mathcal{A}$ to $\mathcal{A'}$ using row reduction. The input matrix $\mathcal{A}$ is partitioned into blocks $E, F, G, H$. If $\det(E)=0$ then the problem is solved. So we assume that $E$ is non-singular. The block matrix $E^\prime$ is produced by doing a Gaussian elimination to reduce it to a upper triangular matrix. All the column entries below that upper triangular matrix is made zero using row-operations. Thus the new matrix formed is $\mathcal{A}^\prime$ is of the form:
\begin{equation}
\mathcal{A} = \begin{pmatrix}
  E & F\\
  G & H
\end{pmatrix}
\hspace{2ex} \longrightarrow  \hspace{2ex}
\mathcal{A}^\prime = \begin{pmatrix}
  E^\prime & F^\prime\\
  0 & H^\prime
\end{pmatrix}
\end{equation}
In this experiment we changed the block size of $E$. The matrix $H^\prime$ is called the Schur complement. 
The algorithm then searches for a $2 \times 2$ zero minor in $H^\prime$. The procedure is repeated with a new kernel until a zero-minor is found. If a $2 \times 2$ zero minor is found then the discrete logarithm problem is solved.

When a zero minor is found in the Schur complement, that results in a zero minor in the original matrix $\mathcal{A}$ by adjoining the indices of the zero minor in $H^\prime$ with the indices of $E$. Note that $E$ is a principal minor with contiguous rows starting with the first index of $\mathcal{A}$. More about this algorithm is in our earlier work~\cite[Section 4]{second}. 

If we compare Table~\ref{tab:2minorResult} and Table~\ref{tab:gesc}, it is clear that Gaussian elimination Schur complement made significant improvements over the the all\_2\_minor experiment. Though we are still dealing with exponential growth with the number of kernels used to get the first zero minor, this method was able to solve the discrete logarithm problem for a $50$-bit binary input and it checks a very small subset of all minors.

This success indicates that probably Gaussian elimination Schur complement was able to find a subset of a initial minor set. However the growth in the number of kernel used shows that the subset was probably not a significant subset. Motivated by the Gaussian elimination Schur complement algorithm, in the next section we did the \emph{almost principal minor experiment}.

\section{Almost principal minor algorithm}
In the Gaussian elimination Schur complement experiment, a $2 \times 2$ zero minor from $H^\prime$ was augmented with the principal block matrix $E^\prime$.
This augmented minor was then used to solve the discrete logarithm problem.
If the augmented zero minor had $k$ rows and columns, the block matrix $E^\prime$ has dimensions $(k-2) \times (k-2)$. The block matrix $E^\prime$ has contiguous row and column indices. Moreover these row and column indices are identical. Such a minor is referred to as a principal minor.

Thus a $k \times k$ minor of $\mathcal{A}$ tested in Gaussian elimination Schur complement has row indices $\alpha$ and column indices $\beta$  which are as follows:
\begin{equation}
    \alpha = \{ 1,2,3,...,(k-2), a_1, a_2\}
    \hspace{2ex}
    \beta = \{1,2,3,...,(k-2), b_1, b_2\}
\end{equation}
where, $a_1, a_2$ and $b_1, b_2$ are row and column indices respectively of a $2 \times 2$ zero minor from $H^\prime$. We say that this $k \times k$ zero minor has \emph{almost contiguous row and column indices}. The row and column indices $\alpha$ and $\beta$ respectively are from the $2 \times 2$ zero minor of $H^\prime$ are called the \textbf{deviations to the principal minor} $\{1,2,\ldots,(k-2)\}$. We refer to this $k \times k$ minor with contiguous indices followed by deviations as an \textbf{almost principal minor}.

Note that these deviations can occur either before or after the contiguous part. Some examples of almost principal minors are:
\begin{equation}
    \alpha = \{ 1, 2, 3, 4, 5, \textcolor{red}{13, 32} \}
\label{rowDev}
\end{equation}
\begin{equation}
    \beta = \{ \textcolor{blue}{1, 2}, 7, 8, 9, 10, 11 \}
\label{colDev}
\end{equation}
Also the deviations can be both before or after the contiguous part as the following example shows:
\begin{equation}
    \alpha = \{ \textcolor{blue}{1}, 7, 8, 9, 10, 11, \textcolor{red}{13} \}
\label{mixDev}
\end{equation}
These deviations can not occur within the contiguous part.

In the Gaussian elimination Schur complement experiment, the principal minor at the top left of the matrix $\mathcal{A}$ was used.
There were row operations involved in the process of converting $\mathcal{A}$ to $\mathcal{A}^\prime$. However, for an almost principal minor algorithm there are no row-operations involved. We test all almost principal minors with two and three deviation. An almost principal minor has two parameters, the principal minor and the deviation. Experimentally we observed that increasing the dimension of the principal minor and keeping deviation to two was not helpful. But, fixing the principal minor dimension to two and using two deviations followed by three deviations was helpful. 

\subsection{Experiments with almost principal minor}
This experiment used the fixed set of almost principal minor to search for a zero minor.
The input to this algorithm consists of a matrix $\mathcal{A}$ as described earlier.
These matrices are over binary fields of sizes from 25 to 47 bits.
For each input we count the number of kernels required before the first zero minor is found. Then the experiment is repeated ten times. The size of $\mathcal{E}$ is a prime $p$ and about the same size of the field.
The average number of kernels required to solve the discrete logarithm problem is then recorded for each binary field. Averaging over ten attempts is used to minimize the effect of randomization, as a randomized algorithm can sometimes solve the problem quickly or take a longer time for the same input. If the number of kernels grows gradually as inputs size increase, then ten attempts should be sufficient to minimize randomization effects. This was the case in our experiments, thus averaging  over ten attempts is effective.

Table~\ref{tbl:apm-avg-kernel-cnt} gives the average kernel count for this experiment over ten tries.
For inputs between 28 and 44 bits, the problem was solved using one kernel on average. For the 45-bit input, the average iteration count was 1.2, where one instance required three kernels. The probabilistic nature of the Las Vegas algorithm means that an occasional instance may require more kernels than the average. However we did not observe any sudden or abrupt spike. This says that the computation is stable.
\begin{adjustwidth}{-2.5 cm}{-2.5 cm}\centering\begin{threeparttable}[!htp]
\caption{APM : average kernel count}\label{tbl:apm-avg-kernel-cnt}
\footnotesize
\begin{tabular}{c|ccccccccccccccccccccc}\toprule
\multirow{2}{*}{Attempts}&  \multicolumn{20}{c}{Field size in bits} \\
 &28 &29 &30 &31 &32 &33 &34 &35 &36 &37 &38 &39 &40 &41 &42 &43 & 44 &45&46&47\\\midrule
1  &1 &1 &1 &1 &1 &1 &1 &1 &1 &1 &1 &1 &1 &1&1&1 &1&1&1&1 \\
2  &1 &1 &1 &1 &1 &1 &1 &1 &1 &1 &1 &1 &1 &1&1&1 &1&1&1&1 \\
3  &1 &1 &1 &1 &1 &1 &1 &1 &1 &1 &1 &1 &1 &1&1&1 &1&1&\textbf{3}&\textbf{3} \\
4  &1 &1 &1 &1 &1 &1 &1 &1 &1 &1 &1 &1 &1 &1&1&1 &1&1&\textbf{2}&1 \\
5  &1 &1 &1 &1 &1 &1 &1 &1 &1 &1 &1 &1 &1 &1&1&1 &1&1&1&1 \\
6  &1 &1 &1 &1 &1 &1 &1 &1 &1 &1 &1 &1 &1 &1&1&1 &1&1&1&\textbf{6} \\
7  &1 &1 &1 &1 &1 &1 &1 &1 &1 &1 &1 &1 &1 &1&1&1 &1&1&\textbf{2}&1 \\
8  &1 &1 &1 &1 &1 &1 &1 &1 &1 &1 &1 &1 &1 &1&1&1 &1&\textbf{3}&1&1 \\
9  &1 &1 &1 &1 &1 &1 &1 &1 &1 &1 &1 &1 &1 &1&1&1 &1&1&1&1 \\
10  &1 &1 &1 &1 &1 &1 &1 &1 &1 &1 &1 &1 &1 &1&1&1 &1&1&\textbf{3}&\textbf{2} \\\midrule
AVG  &1 &1 &1 &1 &1 &1 &1 &1 &1 &1 &1 &1 &1 &1&1&1 &1&1.2&1.6&1.8 \\
\bottomrule
\end{tabular}
\end{threeparttable}\end{adjustwidth}
\vspace{3ex}

Table~\ref{tab:by-apm-3d-dev} shows the deviations when the discrete logarithm was solved. It was observed that as the field size increases, there are no zero minors with two deviations, and the problem is almost always solved with three deviations.
\begin{adjustwidth}{-2.5 cm}{-2.5 cm}\centering\begin{threeparttable}[!htp]
\caption{APM : deviation when ECDLP solved}\label{tab:by-apm-3d-dev}
\footnotesize
\begin{tabular}{c|ccccccccccccccccccccc}\toprule
\multirow{2}{*}{Attempts}&  \multicolumn{20}{c}{Field size in bits} \\
 &28 &29 &30 &31 &32 &33 &34 &35 &36 &37 &38 &39 &40 &41 &42 &43 &44 &45 &46 &47  \\\midrule
1  &2 &2 &2 &\textcolor{red}{3} &2 &\textcolor{red}{3} &\textcolor{red}{3} &\textcolor{red}{3} &\textcolor{red}{3} &\textcolor{red}{3} &2 &\textcolor{red}{3} &\textcolor{red}{3} &\textcolor{red}{3} &\textcolor{red}{3} &\textcolor{red}{3}&\textcolor{red}{3}&\textcolor{red}{3}&\textcolor{red}{3}&\textcolor{red}{3} \\
2  &2 &2 &2 &2 &2 &\textcolor{red}{3} &\textcolor{red}{3} &\textcolor{red}{3} &\textcolor{red}{3} &\textcolor{red}{3} &\textcolor{red}{3} &\textcolor{red}{3} &\textcolor{red}{3} &\textcolor{red}{3} &\textcolor{red}{3} &\textcolor{red}{3}&\textcolor{red}{3}&\textcolor{red}{3}&\textcolor{red}{3}&\textcolor{red}{3}\\
3  &2 &2 &2 &2 &2 &2 &\textcolor{red}{3} &\textcolor{red}{3} &2 &\textcolor{red}{3} &\textcolor{red}{3} &\textcolor{red}{3} &\textcolor{red}{3} &\textcolor{red}{3} &\textcolor{red}{3} &\textcolor{red}{3}&\textcolor{red}{3}&\textcolor{red}{3}&\textcolor{red}{3}&\textcolor{red}{3}\\
4  &2 &2 &2 &\textcolor{red}{3} &\textcolor{red}{3} &\textcolor{red}{3} &\textcolor{red}{3} &\textcolor{red}{3} &\textcolor{red}{3} &\textcolor{red}{3} &\textcolor{red}{3} &\textcolor{red}{3} &\textcolor{red}{3} &\textcolor{red}{3} &\textcolor{red}{3} &\textcolor{red}{3}&\textcolor{red}{3}&\textcolor{red}{3}&\textcolor{red}{3}&\textcolor{red}{3}\\
5  &2 &2 &2 &\textcolor{red}{3} &2 &\textcolor{red}{3} &\textcolor{red}{3} &2 &\textcolor{red}{3} &\textcolor{red}{3} &\textcolor{red}{3}&\textcolor{red}{3} &\textcolor{red}{3} &\textcolor{red}{3} &\textcolor{red}{3} &\textcolor{red}{3}&\textcolor{red}{3}&\textcolor{red}{3}&\textcolor{red}{3}&\textcolor{red}{3}\\
6  &2 &2 &2 &\textcolor{red}{3} &2 &\textcolor{red}{3} &\textcolor{red}{3} &\textcolor{red}{3} &\textcolor{red}{3} &\textcolor{red}{3} &\textcolor{red}{3} &\textcolor{red}{3} &\textcolor{red}{3} &\textcolor{red}{3} &\textcolor{red}{3} &\textcolor{red}{3}&\textcolor{red}{3}&\textcolor{red}{3}&\textcolor{red}{3}&\textcolor{red}{3}\\
7  &2 &2 &\textcolor{red}{3} &2 &\textcolor{red}{3} &\textcolor{red}{3} &\textcolor{red}{3} &\textcolor{red}{3} &\textcolor{red}{3} &\textcolor{red}{3} &\textcolor{red}{3} &\textcolor{red}{3} &\textcolor{red}{3} &\textcolor{red}{3} &\textcolor{red}{3} &\textcolor{red}{3}&\textcolor{red}{3}&\textcolor{red}{3}&\textcolor{red}{3}&\textcolor{red}{3}\\
8  &2 &2 &2 &\textcolor{red}{3} &\textcolor{red}{3} &2 &\textcolor{red}{3} &\textcolor{red}{3} &\textcolor{red}{3} &\textcolor{red}{3} &\textcolor{red}{3} &\textcolor{red}{3} &\textcolor{red}{3} &\textcolor{red}{3} &\textcolor{red}{3} &\textcolor{red}{3}&\textcolor{red}{3}&\textcolor{red}{3}&\textcolor{red}{3}&\textcolor{red}{3}\\
9  &2 &2 &2 &\textcolor{red}{3} &2 &\textcolor{red}{3} &2 &2 &\textcolor{red}{3} &\textcolor{red}{3} &\textcolor{red}{3} &\textcolor{red}{3} &\textcolor{red}{3} &\textcolor{red}{3} &\textcolor{red}{3} &\textcolor{red}{3}&\textcolor{red}{3}&\textcolor{red}{3}&\textcolor{red}{3}&\textcolor{red}{3}\\
10 &2 &2 &2 &2 &\textcolor{red}{3} &\textcolor{red}{3} &\textcolor{red}{3} &2 &\textcolor{red}{3} &\textcolor{red}{3} &\textcolor{red}{3} &\textcolor{red}{3} &\textcolor{red}{3} &\textcolor{red}{3} &\textcolor{red}{3} &\textcolor{red}{3} &\textcolor{red}{3}&\textcolor{red}{3}&\textcolor{red}{3}&\textcolor{red}{3}\\
\bottomrule
\end{tabular}
\end{threeparttable}\end{adjustwidth}
\vspace{3ex}
Table~\ref{tbl:by-apm-3d-pmCnt} gives the position of the principal minor when a zero minor was found, and it can be seen that for field sizes $36, 37$ and $38$, the principal minor is towards the left side of the matrix $\mathcal{A}$.
As the field size increases, the position of the principal minor moves towards the end of the matrix. For a given deviation when the position of the principal minor is towards the right side of $\mathcal{A}$ then we increase the deviation count.
If the deviation count is not increased then the algorithm will take more than one kernel to solve the discrete logarithm.
In the case of two deviations the complexity of the algorithm is $O(l^5)$.
For three deviations the complexity is $O(l^7)$.
In the general case of $n$ deviations the complexity is $O(l^{2n+1})$.

\begin{adjustwidth}{-2.5 cm}{-2.5 cm}\centering\begin{threeparttable}[!htp]
\caption{APM : principal minor position}\label{tbl:by-apm-3d-pmCnt}
\footnotesize
\begin{tabular}{c|cccccccccccccccccccccc}\toprule
\multirow{2}{*}{Attempts}&  \multicolumn{20}{c}{Field size in bits} \\
 &27 &28 &29 &30 &31 &32 &33 &34 &35 &36 &37 &38 &39 &40 &41 &42 &43 &44&45&46&47\\\midrule
1  &13 &5 &10 &10 &1 &23 &1 &1 &1 &1 &4 & 39 &13 &10 &34 &13 &46 &63&5 &12&73\\
2  &6 &7 &21 &22 &90 &74 &1 &1 &1 &3 &1 &1 &8 &9  &38 &31 &23&82&1 &8&85\\
3  &4 &49 &1 &9 &20 &66 &60 &1 &1 & 32 &1 &2 &3 &13 &4 &25 &20&33&109 &74 &46\\
4  &2 &6 &12 &55 &1 &1 &1 &1 &1 &1 &2 &1 &4 &5 &19 &33 &20&21&43 &50&36\\
5 &11 &1 &13 &1 &1 &47 &1 &1 & 20 &2 &2 &3 &2 &5 &19 &26 &17&87&88 &14&76\\
6  &3 &4 &2 &2 &1 &27 &1 &1 &2 &1 &2 &2 &1 &13 &7 &47 &21&56&26 &26&16\\
7  &2 &4 &7 &1 &66 &1 &1 &1 &1 &1 &6 &1 &1 &9 &1 &56 &63&44&40 &71&61\\
8 &21 &14 &35 &26 &1 &1 &47 &1 &1 &2 &2 &3 &1 &31 &1 &22 &19&10&36 &27&9 \\
9  &1 &10 &6 &9 &1 &76 &1 & 25 & 95 &1 &2 &2 &1 &17 &7 &41 &4&26&63 &17&9\\
10  &3 &15 &4 &11 &24 &1 &1 &1 & 90 &3 &1 &3 &2 &8 &4 &6 &84&121&58 &63&29\\
\bottomrule
\end{tabular}
\end{threeparttable}\end{adjustwidth}
\vspace{3ex}

\subsection{Comaparing APM and GESC}
Table~\ref{tbl:all-apm} gives a comparison between the average kernel count for the almost principal minor algorithms, and Gaussian elimination Schur complement algorithm. 
Figure~\ref{fig:comparision} gives a logarithmic scaled graph for data given in Table~\ref{tbl:all-apm}. 
For the Gaussian elimination Schur complement experiment, the kernel size was amplified three times, while for the almost principal minor algorithm, the kernel size was not amplified. When the kernel size is not amplified the number of kernel required increases. So the graph in Figure~\ref{fig:comparision} is not apples to apples. It will only get worse if we do an apple to apple comparison.

When we compare these two experiments, one can argue that the almost principal minor algorithm is testing more minors than the Gaussian elimination Schur complement algorithm. That is the reason for the speed-up. However, one must keep this in mind that to to be able to find minors means that there exist zero minors in $\mathcal{A}$. Thus testing more minors is producing better results.

This suggests that the almost principal minor algorithm uses a more comprehensive set of initial minors than Gaussian elimination Schur complement. The data suggests that the growth of number of kernels in Gaussian elimination Schur complement is much slower than Gaussian elimination Schur complement algorithm. Though from these experiments it is not possible to say for sure if what the future growth will be, we think that this can be a viable line of investigation in solving the discrete logarithm problem.

From the data available one can probably postulate that the almost principal minor algorithm has a polynomial time complexity in size of the matrix $\mathcal{A}$. This can yield to a polynomial time algorithm to solve the discrete logarithm problem in the future.

\begin{adjustwidth}{-2.5 cm}{-2.5 cm}\centering\begin{threeparttable}[!htp]
\caption{APM Vs Gaussian elimination Schur complement : Average kernel count}\label{tbl:all-apm}
\footnotesize
\begin{tabular}{r|cccccccccccccccccc}\toprule
\multirow{2}{*}{Algorithm}&  \multicolumn{16}{c}{Field size in bits} \\
 &30 &31 &32 &33 &34 &35 &36 &37 &38 &39 &40 &41 &42 &43 & 44 & 45& 46 & 47\\\midrule
APM &1 &1 &1 &1 &1 &1 &1 &1 &1 &1 &1 &1 &1 &1 &1 & 1.2 & 1.6 & 1.8 \\
GESC &1 &1 &1 &1 &1 &1 &1 &1 &1 &1.2 &2.8 &3.1 & 6.6 & 11.3 & 17.1 & 53.8 &40.2 & 126 \\
\bottomrule
\end{tabular}
\end{threeparttable}\end{adjustwidth}
\begin{center}
\begin{figure}
 \includegraphics[width=17.9cm]{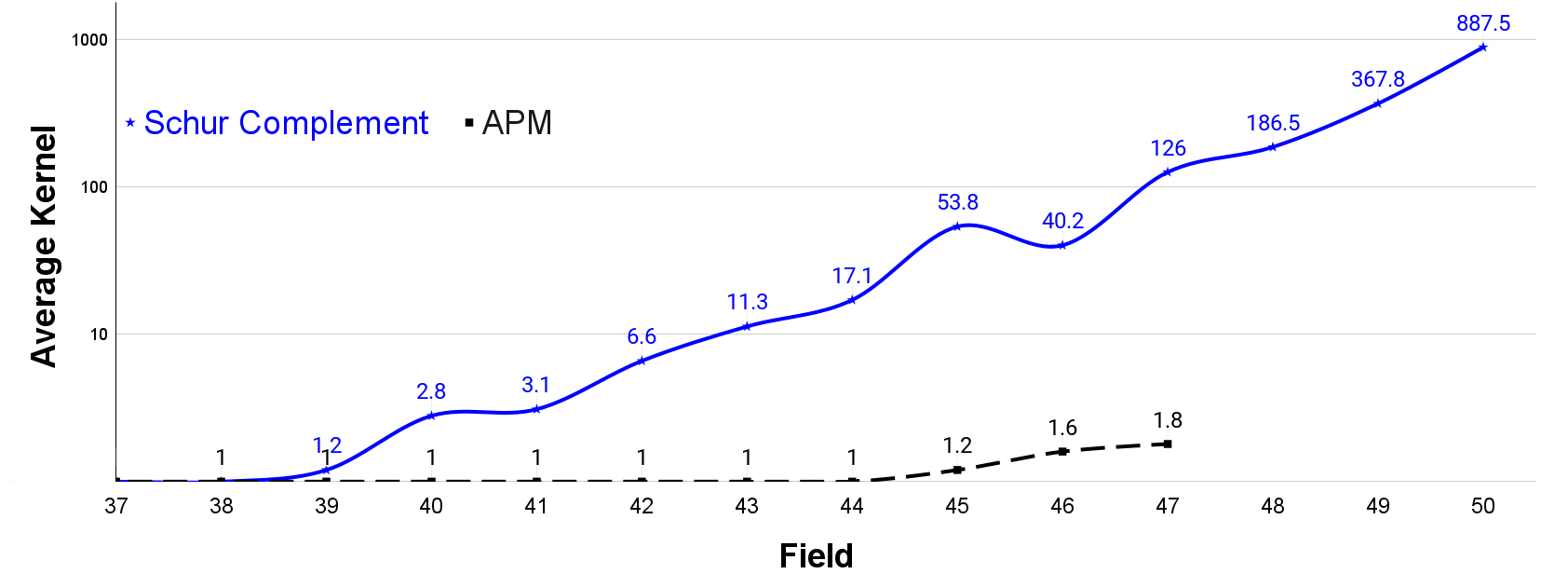}
\caption{GESC Vs APM }
\label{fig:comparision}
\end{figure}
\end{center}
\section{Conclusions}
In this paper, we continued developing a novel method to solve the elliptic curve discrete logarithm problem~\cite{first,second}. Our method is unique as it uses minors to solve the discrete logarithm problem. Furthermore, we tried to establish the initial minor conjecture on a firm footing. We find this conjecture interesting.

At the heart of our method is partition of $m$ modulo $p$ into $d$ parts. The process is simply to find integers $n_i$ for $i=1,2,\ldots,d$ where each $0<n_i<p$ such that $\sum_i{n_i}=0\mod p$. However, not any partition will do. Since we want to solve the discrete logarithm problem, both $\mathrm{P}$ and $\mathrm{Q}$ should be involved. Which means that $m$ should divide some of these $n_i$, but not all. Since, we do not know $m$, this is a hard problem. The way we handled this problem is by going to curves $\mathcal{C}$ intersecting the elliptic curves $\mathcal{E}$ at $d$ points. The extra points theorem gave us the leverage to move into zero minors.

Now once we see this process as a whole, then one partition of $m$ will give rise to several partitions of $m$. One such possibility is that for a non-zero $c$, such that, $0<c<p$. Now consider $n_i=n_i-c$ and $n_j=n_j+c$ for two different $i$ and $j$, which are not divisible by $m$. This will give rise to a different set of points on $\mathcal{E}$ and subsequently another partition of $m$. Of course, there will be other partitions related to the original partition that have no explicit and easy description like the one above.

But, our algorithm is probabilistic in nature. We have no control on the points $n_i$ that we choose. They are chosen at random. On the other hand, there are simply so many related partitions that some of them are bound to occur. In other words, once we see one zero minor, there will be many zero minors related to the original zero minor. This was also observed experimentally.

In the following experiment we took the finite field of size $2^{25}$. We computed a left-kernel $\mathcal{K}$ and then $\mathcal{A}$ for an appropriate choice of elliptic curve and points $P_i$ and $Q_j$ as described earlier. The goal was to find all zero minors in $\mathcal{A}$. It turned out to be impossible. So we used a short cut. We took contiguous principal minors one after the other of size $2$. For each one of those principal minors we took all possible $2,3$ and $4$ deviations. We checked for zero minor in those almost principal minors. The result was somewhat stunning and is presented in this table
\begin{center}
\begin{tabular}{ r | c c c c}
 & 2 deviation & 3 deviation & 4 deviation & total \\ 
 \hline
 average no.~zero minor & 1 & 230 & 70498 & 70729 \\ 
 standard deviation & & 15 & 349 &   351
\label{ZM-table}
\end{tabular}
\end{center}
where the average is over $66$ principal minors of size $2$.

One would expect that there would be a pattern in these zero minors. Though we have not found any pattern. But, a pattern would be interesting and that will shed light to the initial minor conjecture.

\bibliographystyle{abbrv}
\bibliography{ref}
\end{document}